\documentclass[journal,twoside,web]{ieeecolor}

\usepackage{generic}
\usepackage{cite}
\usepackage{amsmath,amssymb,amsfonts}
\usepackage{algorithmic}
\usepackage{graphicx}
\usepackage{textcomp}
\usepackage{epsfig} 
\usepackage{lcsys}

\usepackage{url}
\usepackage{makecell}

\newtheorem{theorem}{Theorem}

\newtheorem{definition}{Definition}
\newtheorem{corollary}{Corollary}

\begin{document}

\title{Trapping Regions for Quadratic Systems with Generalized Lossless Nonlinearities}

\author{Diganta Bhattacharjee$^{a}$, Shih-Chi Liao$^{b}$, Peter J. Seiler$^{b}$, and Maziar S. Hemati$^{a}$
\thanks{$^{a}$Department of Aerospace Engineering and Mechanics, University of Minnesota, Minneapolis, MN 55455, USA; \texttt{dbhattac@umn.edu, mhemati@umn.edu}}
\thanks{$^{b}$Department of Electrical Engineering and Computer Science, University of Michigan, Ann Arbor, MI 48109, USA; \texttt{shihchil@umich.edu, pseiler@umich.edu}}
} 

\maketitle
\thispagestyle{empty}

\begin{abstract}
We consider a class of quadratic systems, primarily motivated by incompressible fluid flows, where the nonlinearities are generalized lossless: they do not produce or dissipate energy, as measured by a generalized quadratic metric. 
Our goal is to compute trapping regions, which are forward invariant sets that certify ultimate boundedness.
The key contribution is a novel parameterization of the generalized lossless condition that enables optimization of trapping regions for a broader class of quadratic systems.
We also formulate the conditions for ellipsoidal trapping regions, whereas spherical regions have been the focus of prior works. 
We provide three numerical examples, which 
demonstrate the improvements offered by the proposed approach relative to existing methods.
\end{abstract}

\begin{IEEEkeywords}
Lyapunov methods; LMIs; Optimization
\end{IEEEkeywords}

\section{Introduction}
\IEEEPARstart{W}{e} consider the class of autonomous, nonlinear ordinary differential equations with a constant, linear, and quadratic term.  
This class of systems can be used to model physical processes, e.g., incompressible fluid flows \cite{schlegel2015long, callaham2023multiscale, loiseau2018constrained, cavalieri2022transition, balajewicz2013low, kraichnan1989there}, atmospheric processes \cite{goyal2025guaranteed, peng2025extending}, and magnetohydrodynamics \cite{kaptanoglu2021physics, kaptanoglu2021promoting}. 
An important feature for these systems is (long-term) ultimate boundedness of state-trajectories \cite{vidyasagar2002nonlinear, khalil2002nonlinear}.  
We focus on the specific notion of a globally attracting trapping region as this is commonly used to assess boundedness of incompressible fluid flow models \cite{schlegel2015long, kaptanoglu2021promoting, goyal2025guaranteed, liao2025convex, heide2025data}.
It is important to note that the quadratic term in such models is lossless: the inner product between system state and quadratic nonlinearity is identically equal to zero \cite{schmid_henningson_book}.  
This implies that the quadratic term neither produces nor dissipates energy. 
Prior work has used this lossless condition and Lyapunov theory to derive sufficient conditions for the existence of a trapping region \cite{schlegel2015long, kaptanoglu2021promoting, goyal2025guaranteed, liao2025convex, heide2025data}.

Our key contribution is the incorporation of a generalized lossless condition for the analysis. 
Specifically, the generalized condition is: the \emph{weighted} inner product between system state and quadratic nonlinearity is identically equal to zero. 
We provide a novel parameterization of this generalized lossless condition.
We then formulate conditions for an ellipsoidal trapping region subject to a linear equality constraint based on our parameterization. 
This enables analysis of a broader class of quadratic systems, e.g., incompressible fluid flow models with a wide range of boundary conditions and geometries \cite{schlegel2015long, callaham2023multiscale, loiseau2018constrained, cavalieri2022transition, balajewicz2013low, kraichnan1989there}.   
A numerical procedure is proposed that simultaneously optimizes the trapping ellipsoid and the generalized lossless constraint. 
We demonstrate our proposed framework on three numerical examples, two of which admit the standard lossless condition used in prior work. 
For those two examples, our approach reduces conservatism in the ultimate bound relative to existing methods \cite{schlegel2015long, goyal2025guaranteed, liao2025convex}.
The third example does not satisfy the standard lossless constraint, making existing state-of-the-art methods inapplicable.

There is a large relevant literature on trapping regions for quadratic systems \cite{schlegel2015long, goyal2025guaranteed, liao2025convex, kaptanoglu2021promoting, heide2025data, peng2025extending}.
This is due to the importance for analyzing ultimate boundedness of fluid flows. 
Ultimate boundedness can be certified via invariant sets such as trapping regions. Other methods include region of attraction computation using sum-of-squares programming \cite{tan2008stability} and linear matrix inequalities \cite{chesi2005lmi} as well as Lur'e-type interpretation of the quadratic dynamics \cite{liu2020input}.
The most relevant work is the recent data-driven modeling framework by Goyal et al. \cite{goyal2025guaranteed}. 
They introduced the concept of a generalized lossless condition that we use in our results. 
However, they reverted to the standard lossless constraint for all their numerical results. 
In fact, they remark that future work will ``require developing novel strategies to efficiently handle the additional complexity introduced by'' the generalized lossless condition.  
Our work addresses this gap.

\textit{Notation:} We use $P > 0$ (or $P < 0$) to mean that $P$ is a symmetric, positive (or negative) definite matrix.
We define an ellipsoid centered at the origin with shape $P > 0$ and radius $r>0$ as $\mathcal{E}(P,r) := \{ x \in \mathbb{R}^n \, : \, x^\top P x \leq r^2\}$.
A norm ball of radius $r>0$ is denoted as $\mathcal{B}(r):=\{x \in \mathbb{R}^n \, : \, x^\top x \leq r^2\}$.
If $X \in \mathbb{R}^{n\times n}$ then $\mathrm{vec}(X)$ denotes the $n^2 \times 1$ vector obtained by vertically stacking the columns of $X$.
Moreover, $\lambda_{\mathrm{min}}(X)$ and $\lambda_{\mathrm{max}}(X)$ denote the minimum and maximum eigenvalues of $X$, respectively.


\section{Preliminaries} \label{sec:Preliminaries}
This section provides the technical background and problem formulation for the trapping region analysis. 
\subsection{Quadratic Systems} \label{Sec:Class of Models}
Consider a quadratic system of the form 
\begin{equation} \label{eq:Quad_sys}
\dot{x}(t) = Ax(t) + Q(x(t)) + d
\end{equation} 
where $x(t) \in \mathbb{R}^n$ is the state, $d \in \mathbb{R}^n$ is a constant bias vector, and $A \in \mathbb{R}^{n \times n}$ is the state matrix. 
Moreover, $Q: \mathbb{R}^n \rightarrow \mathbb{R}^n$ is a quadratic nonlinearity defined as 
\begin{equation} \label{eq:Quadratic_nonlinearity}
Q(x) := \begin{bmatrix}
x^\top Q_1 x & x^\top Q_2 x & \cdots & x^\top Q_n x
\end{bmatrix}^\top
\end{equation}
where $Q_i = Q_i^\top \in \mathbb{R}^{n \times n}, i=1,2,\dots,n$, i.e., each $Q_i$ is symmetric but not necessarily sign definite.
We can analyze systems of the form \eqref{eq:Quad_sys} with an uncertain $d$, but we take $d$ as a constant to be consistent with the trapping region literature. 
We take a coordinate shift $m \in \mathbb{R}^n$ \cite{schlegel2015long, liao2025convex} into account. 
Specifically, we express the system \eqref{eq:Quad_sys} in a coordinate system $y := x - m$ as \cite{liao2025convex}
\begin{equation} \label{eq:Quadsys_shift_coordinates}
\dot{y}(t) = L(m) y(t) + Q(y(t)) + c(m)
\end{equation} 
where $Q(\cdot)$ is as given in \eqref{eq:Quadratic_nonlinearity} and 
\begin{align*}
L(m) &:= A + 2 \begin{bmatrix}
Q_1 m^\mathrm & Q_2 m^\mathrm & \cdots & Q_n m
\end{bmatrix}^\top, \\
c(m) &:= d + Am + Q(m).
\end{align*}
Prior results in the literature utilize an energy-preserving or lossless property on the nonlinearity, which is given by $y^\top Q(y) = 0, \forall y \in \mathbb{R}^n$ \cite{schlegel2015long, liao2025convex, kaptanoglu2021promoting}.
However, this particular form of the constraint might not hold in all applications.
This motivates a generalized form of the lossless property in this letter and it is defined next. 
\begin{definition} \label{definition:lossless_nonlinearity}
A quadratic nonlinearity $Q$ is termed \emph{generalized lossless} if there exists some $S \in \mathbb{R}^{n \times n}$ such that $y^\top S Q(y) = 0, \forall y \in \mathbb{R}^n$.
\end{definition} 
Setting $S=I$ retrieves the lossless constraint used in the prior work on trapping region analysis.
The generalized lossless constraint in Definition \ref{definition:lossless_nonlinearity} has been proposed in the context of low-order flow modeling techniques \cite{goyal2025guaranteed}.
The generalization not only extends applicability of the analysis to a broader class of quadratic systems, but also facilitates an optimization over the admissible set of lossless inner product matrices $S$ for a given quadratic nonlinearity $Q$.

As an example, consider the chaotic Lorenz attractor which has been used as a benchmark problem in the prior work \cite{schlegel2015long, kaptanoglu2021promoting}. 
The dynamics are governed by
\begin{equation} \label{eq:Lorenz_system}
\begin{bmatrix}
\dot{x}_1 \\ \dot{x}_2 \\ \dot{x}_3
\end{bmatrix} = \begin{bmatrix}
-\sigma & \sigma & 0 \\ 
\rho & -1 & 0 \\
0 & 0 & -\eta
\end{bmatrix} \begin{bmatrix}
x_1 \\ x_2 \\ x_3
\end{bmatrix} + \begin{bmatrix}
0 \\ -x_1 x_3 \\ x_1 x_2
\end{bmatrix}
\end{equation}
where $\sigma, \rho, \eta$ are positive constants.
The nonlinearity here is generalized lossless with respect to any matrix
\begin{equation} \label{eq:generalized_lossless_matrix_Lorenz}
    S = \begin{bmatrix}
        s_{11}  & 0 & 0 \\
        s_{21} & s_{22} & 0 \\
        s_{31} & 0  & s_{22}
    \end{bmatrix},~s_{ij} \in \mathbb{R}.
\end{equation} 
Prior work on trapping regions for the Lorenz attractor utilized $S=I$ \cite{schlegel2015long, liao2025convex, kaptanoglu2021promoting}, which is a special case of the generalized form in \eqref{eq:generalized_lossless_matrix_Lorenz}. 
Our main result incorporates the generalized lossless constraint into a sufficient trapping region condition.

\subsection{Trapping Regions}
 Let $\phi(t;y_0,t_0)$ denote the solution of \eqref{eq:Quadsys_shift_coordinates} at time $t \geq t_0$ starting from the initial condition $y_0 \in \mathbb{R}^n$ at $t_0 \in \mathbb{R}_{\geq 0}$.
The main objective of this work is to establish global boundedness of the solutions of \eqref{eq:Quadsys_shift_coordinates}. 
This is defined as follows.
\begin{definition} [\cite{khalil2002nonlinear, liao2025convex}]
The solutions of system \eqref{eq:Quadsys_shift_coordinates} are \emph{globally uniformly ultimately bounded} if there exists a scalar $\beta > 0$, independent of $t_0 \geq 0$, and a function $T: \mathbb{R}_{\geq 0} \rightarrow \mathbb{R}_{\geq 0}$ such that $\| \phi (t;y_0,t_0)\| \leq \beta$ for all $y_0 \in \mathbb{R}^n$ and $t \geq t_0 + T(\|y_0\|)$.
\end{definition}
Note that a system is bounded  in the original coordinates $x$ if and only if it is bounded in the shifted coordinates $y=x-m$. Our goal is to use the concept of a trapping region to show boundedness of the quadratic dynamics. A trapping region is defined as follows.
\begin{definition}
A \emph{trapping region} $\mathcal{D} \subseteq \mathbb{R}^n$ is a compact set that is forward invariant, i.e., if $y(t_0) \in \mathcal{D}$ then $y(t) \in \mathcal{D},~\forall t \geq t_0$. A trapping region $\mathcal{D} \subseteq \mathbb{R}^n$ is called \emph{globally attracting} if there exists an open set $\mathcal{O} \supset \mathcal{D}$ such that every trajectory enters $\mathcal{O}$ for sufficiently large $t$.
\end{definition} 
By definition, if a trapping region is globally attracting then every trajectory will eventually enter this set, possibly in the limit as $t\to\infty$. 
Moreover, if a trajectory enters the trapping region, then it remains in the trapping region for all future time.  
As one special case, consider $\mathcal{D} := \{0\}$. 
In this case, $\mathcal{D}$ is a trapping region if and only if $y=0$ is an equilibrium point. 
In addition, $\mathcal{D} := \{0\}$ is a globally attracting trapping region if and only if $y=0$ is a globally asymptotically stable equilibrium point.

Previous work on quadratic systems \cite{schlegel2015long, liao2025convex} has focused on spherical trapping regions $B(r)$ with ``energy'' functions $W(y):=y^\top y$.  
Suppose $W$ is monotonically decreasing outside a given ball $B(r)$. 
Then it can be shown, via a Lyapunov-like analysis, that the ball is a trapping region. 
In this special case, the ball $B(r)$ is known as a globally monotonically attracting trapping region \cite{schlegel2015long,  liao2025convex}.   
In this letter, we will consider more general (ellipsoidal) trapping regions and more general classes of Lyapunov-like (``energy'') functions of the form $V(y) := y^\top P y$ where $P > 0$.   
We focus on globally attracting trapping regions in this work. 
Throughout the remainder of the letter, we will be using the terms \emph{trapping region} and \emph{trapping ellipsoid} in reference to a globally attracting trapping region.


\section{Trapping Region Analysis} \label{sec: Global Trapping Region Analysis}
Technical results of the proposed analysis as well as the associated numerical framework for computing the trapping ellipsoids are discussed in this section.
\subsection{Analysis Conditions} \label{Sec: Analysis Conditions}
We first establish the condition for which a trapping region exists under the generalized lossless constraint.
\begin{theorem} \label{theorem:sufficient_trapping_condition}
    Consider system \eqref{eq:Quadsys_shift_coordinates} where the quadratic term $Q$ satisfies the generalized lossless constraint for a matrix $S > 0$.
    The system admits a trapping region if there exists $m \in \mathbb{R}^n$ such that $S L(m) + L^\top(m) S < 0$.
\end{theorem}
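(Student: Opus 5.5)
The plan is a Lyapunov-like argument with the energy function $V(y) := y^\top S y$. One point needs care first: the statement uses $S>0$, and in this paper's notation that means $S$ is symmetric positive definite. So $V$ is a genuine quadratic energy, and
\[
\lambda_{\min}(S)\,\|y\|^2 \le V(y) \le \lambda_{\max}(S)\,\|y\|^2 .
\]
Fix the $m$ given by the hypothesis and write $L=L(m)$, $c=c(m)$. Differentiating $V$ along solutions of \eqref{eq:Quadsys_shift_coordinates} gives
\[
\dot V = y^\top (SL + L^\top S) y + 2 y^\top S Q(y) + 2 y^\top S c .
\]
By the generalized lossless constraint (Definition \ref{definition:lossless_nonlinearity}), the middle term vanishes identically. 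Hence
\[
\dot V = -y^\top M y + 2 y^\top S c, \qquad M := -(SL+L^\top S) > 0 .
\]

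Next I bound $\dot V$ with $\mu := \lambda_{\min}(M) > 0$:
\[
\dot V \le -\mu \|y\|^2 + 2\|Sc\|\,\|y\| .
\]
This is strictly negative whenever $\|y\| > \rho := 2\|Sc\|/\mu$. A cleaner route is Young's inequality, $2y^\top S c \le \tfrac{\mu}{2}\|y\|^2 + \tfrac{2}{\mu}\|Sc\|^2$, which yields
\[
\dot V \le -\tfrac{\mu}{2}\|y\|^2 + \tfrac{2}{\mu}\|Sc\|^2 \le -\alpha V + \beta ,
\]
with $\alpha = \mu/(2\lambda_{\max}(S))$ and $\beta = 2\|Sc\|^2/\mu$.

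The comparison lemma then gives
\[
V(y(t)) \le e^{-\alpha(t-t_0)}V(y_0) + \tfrac{\beta}{\alpha}\bigl(1-e^{-\alpha(t-t_0)}\bigr).
\]
This estimate does three things. First, it rules out finite escape time, so solutions exist for all $t \ge t_0$; this is the main technical point I need to confirm, and the global differential inequality on $V$ handles it. Second, for any $r^2 > \beta/\alpha$, the ellipsoid $\mathcal{E}(S,r)$ is compact and forward invariant: on its boundary $\dot V \le -\alpha r^2 + \beta < 0$, so trajectories cannot leave. Third, every trajectory enters the open set $\{V < r'^2\}$ for any $r'$ with $\beta/\alpha < r'^2 < r^2$ in finite time. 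That time depends only on $V(y_0)$, hence only on $\|y_0\|$, so $\mathcal{E}(S,r)$ is a globally attracting trapping region. The same estimate also gives global uniform ultimate boundedness with $\beta' = r/\sqrt{\lambda_{\min}(S)}$.

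I expect no serious obstacle. The delicate points are confirming that $S$ is symmetric, which holds because $S>0$ in the paper's notation, so that $\dot V$ has the stated form, and the existence of solutions for all time, which the comparison estimate settles.
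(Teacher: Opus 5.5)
Your proof is correct and follows the same route as the paper: take $V(y)=y^\top S y$, cancel the cubic term using the generalized lossless identity, and use negative definiteness of $SL(m)+L^\top(m)S$ to get $\dot V<0$ outside a sufficiently large sublevel set. Your dissipation inequality $\dot V\le-\alpha V+\beta$ and the comparison-lemma step make explicit what the paper only sketches, namely an admissible radius $r^2>\beta/\alpha$, the absence of finite escape time, and global attractivity.
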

\begin{proof}
Consider energy function $V(y) = y^\top P y$ with $P > 0$ and define $q(t) := Q(y(t))$ to simplify notations. 
Taking the derivative of the energy function along the trajectories of system \eqref{eq:Quadsys_shift_coordinates} yields
\begin{equation*}
\begin{split}
    \dot{V}(y(t)) &= y^\top(t) \left(PL(m) + L^\top(m)P   \right) y(t) \\
    & \ + y^\top(t) P q(t) + q^\top(t) P y(t) \\ & + y^\top(t) P c(m) + c^\top(m) P y(t).
    \end{split}
\end{equation*}
Existence of a (global) trapping region necessitates $\dot{V}(y) <0$ globally outside the trapping region. 
Thus, the cubic terms $y^\top(t) P q(t)$ and $q^\top(t) P y(t)$ in the above expression should exactly equate to zero. 
To this end, recall that $Q$ is generalized lossless with respect to $S > 0$ by assumption.
Hence, if $P = S$ then we have $y^\top(t) P q(t) = q^\top(t) P y(t) = 0$ (cf. Definition \ref{definition:lossless_nonlinearity}) and the energy derivative becomes
\begin{equation} \label{eq:energy_derivative_globalTRA-1}
\begin{split}
    \dot{V}(y(t)) &= y^\top(t) \left(S L(m) + L^\top(m) S  \right) y(t) \\
    & \ + y^\top(t) S c(m) + c^\top(m) S y(t).
    \end{split}
\end{equation}
If the quadratic term is negative for all $y \in \mathbb{R}^n \setminus\{0\}$, then $\dot{V}(y) < 0$ holds for sufficiently large $y$. This is true if $S L(m) + L^\top(m) S < 0$ for some $m \in \mathbb{R}^n$.
Thus, there exist invariant sets (i.e., trapping regions) in the form of sub-level sets of $V(y)$, which completes the proof.
\end{proof}
The sufficient condition in Theorem \ref{theorem:sufficient_trapping_condition} is similar to Corollary 5.1 in Goyal et al.\cite{goyal2025guaranteed}. 
The latter also characterizes a norm ball as a trapping region. 
Define $\lambda_1 := \lambda_{\mathrm{max}}\left( S L(m) + L^\top(m) S \right)$ and let the conditions stated in Theorem \ref{theorem:sufficient_trapping_condition} hold. 
Then, Corollary 5.1 in Goyal et al. \cite{goyal2025guaranteed} yields $\mathcal{B}(r)$ with $r=\frac{2 \|S c(m)\|}{|\lambda_1|}$ as a trapping region for system \eqref{eq:Quadsys_shift_coordinates}. Note that the expression here is slightly different from that in \cite{goyal2025guaranteed} as there is a minor typo in \cite{goyal2025guaranteed}.
However, this norm ball is not necessarily invariant with respect to the dynamics, hence not technically a trapping region.
We resolve this inconsistency by considering ellipsoidal sub-level sets of the energy function as trapping regions in this letter.
Substituting $S=I$ in Theorem \ref{theorem:sufficient_trapping_condition} recovers the standard condition for the existence of a globally monotonically attracting spherical trapping region in prior literature \cite{schlegel2015long,liao2025convex}.   

A quadratic nonlinearity $Q$ satisfies the generalized lossless constraint in Definition \ref{definition:lossless_nonlinearity} for many possible $S$ (see, for example, the discussion on the Lorenz attractor in Section \ref{Sec:Class of Models}). 
We utilize a polynomial approach for characterizing all of these possible generalized lossless matrices for a given quadratic nonlinearity.
This involves recasting the generalized lossless constraint into a form that depends only on the matrices defining the quadratic nonlinearity (see \eqref{eq:Quadratic_nonlinearity}) and the generalized lossless matrix $S$. 
First, we note that the inner product $y^\top S Q(y)$ is a cubic polynomial in $y$ but is linear in $S$.
Hence this expression can be written as $g(S) h(y)$ where $g(S)$ is a row vector that is linear in $S$ and $h(y)$ is a column vector of all cubic monomials in $y\in \mathbb{R}^n$. 
The generalized lossless condition $y^\top S Q(y) = g(S) h(y) =0$ for all $y\in \mathbb{R}^n$ thus requires $g(S)=0$.  
Moreover, there exists a matrix $G$ such that $g(S) = G^\top \mathrm{vec}(S)$ since $g$ is a linear function of $S$.
This yields the necessary constraint for the generalized lossless property: $G^\top \mathrm{vec}(S)=0$.
Next, we state a result that combines the sufficient condition in Theorem \ref{theorem:sufficient_trapping_condition} with the above characterization of the generalized lossless property.
\begin{corollary} \label{corollary:sufficient_trapping_condition_genloss}
Consider system \eqref{eq:Quadsys_shift_coordinates} where the quadratic term $Q$ satisfies the generalized lossless constraint for all matrices $S$ that satisfy $G^\top \mathrm{vec}(S)=0$. The system admits a trapping region if there exists $m \in \mathbb{R}^n$ and $P > 0$ such that $P L(m) + L^\top(m) P < 0$ and $G^\top \mathrm{vec}(P) = 0$.
\end{corollary}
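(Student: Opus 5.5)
The plan is to reduce Corollary~\ref{corollary:sufficient_trapping_condition_genloss} directly to Theorem~\ref{theorem:sufficient_trapping_condition} by taking $S := P$. Suppose $m \in \mathbb{R}^n$ and $P > 0$ satisfy $P L(m) + L^\top(m) P < 0$ and $G^\top \mathrm{vec}(P) = 0$. By the hypothesis of the corollary, $Q$ is generalized lossless with respect to every $S$ with $G^\top \mathrm{vec}(S) = 0$. In particular it is generalized lossless with respect to $P$, that is, $y^\top P Q(y) = 0$ for all $y \in \mathbb{R}^n$.

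This identity can also be recovered from the construction preceding the corollary. Writing $y^\top P Q(y) = g(P) h(y)$ with $g(P) = G^\top \mathrm{vec}(P)$, the constraint $G^\top \mathrm{vec}(P)=0$ gives $g(P)=0$, so the cubic form vanishes identically. Consequently $P$ is a positive definite generalized lossless matrix with $P L(m) + L^\top(m) P < 0$, which is exactly the hypothesis of Theorem~\ref{theorem:sufficient_trapping_condition} with $S = P$. The theorem then yields a trapping region.

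For completeness, I would also state the trapping region explicitly. With $V(y) = y^\top P y$, the derivative \eqref{eq:energy_derivative_globalTRA-1} is
\[
\dot V(y) = y^\top(PL + L^\top P)y + 2y^\top P c(m) \le -\mu\|y\|^2 + 2\|Pc(m)\|\,\|y\|,
\]
where $\mu := -\lambda_{\max}(PL+L^\top P) > 0$. Hence $\dot V < 0$ whenever $\|y\| > 2\|Pc(m)\|/\mu$. Choose $r$ so that this ball lies in $\mathcal{E}(P,r)$; it suffices to take $r^2 \ge \lambda_{\max}(P)\,(2\|Pc(m)\|/\mu)^2$. Then $V$ strictly decreases on the boundary and exterior of $\mathcal{E}(P,r)$. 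This makes $\mathcal{E}(P,r)$ forward invariant, and any slightly larger ellipsoid is reached in finite time, so $\mathcal{E}(P,r)$ is globally attracting.

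There is no substantive obstacle here; the only point needing care is that the corollary's hypothesis quantifies over the whole linear subspace $\{S : G^\top\mathrm{vec}(S)=0\}$. That subspace contains the particular $P$ found, so the losslessness cancellation used in Theorem~\ref{theorem:sufficient_trapping_condition} applies. Equivalently, $G^\top\mathrm{vec}(S)=0$ is not only necessary but also sufficient for losslessness, because $h(y)$ collects the cubic monomials and $g(S)h(y)$ is then the zero polynomial.
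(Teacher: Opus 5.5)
Your proposal is correct and follows the paper's route. The paper's proof is the same one-line reduction to Theorem~\ref{theorem:sufficient_trapping_condition} with $S=P$, using that $G^\top\mathrm{vec}(P)=0$ makes $Q$ generalized lossless with respect to $P$. Your explicit ellipsoid is a useful addition that the paper leaves implicit; just take $r^2 > \lambda_{\max}(P)\,(2\|Pc(m)\|/\mu)^2$ strictly, because with equality your bound only gives $\dot V \le 0$ at boundary points along the top eigendirection of $P$ (still enough for invariance, but not the strict decrease you claim).
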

\begin{proof}
Follows directly from the proof of Theorem \ref{theorem:sufficient_trapping_condition} with $P = S > 0$ and the above reformulation of the generalized lossless constraint.
\end{proof}
The next result characterizes a trapping ellipsoid based on the sufficient conditions in Corollary \ref{corollary:sufficient_trapping_condition_genloss}. Define
\begin{equation*}
   \Theta (m,P,r,\chi) := \begin{bmatrix}
    PL(m) + L^\top(m)P + \chi P & P c(m) \\
c^\top(m) P & -\chi r^2
\end{bmatrix}
\end{equation*}
to simplify the notation.
\begin{theorem} \label{theorem:trapping_quadsys_lossless_shift_coordinates}
Consider system \eqref{eq:Quadsys_shift_coordinates} where the quadratic term $Q$ satisfies the generalized lossless constraint for all matrices $S$ that satisfy $G^\top \mathrm{vec}(S)=0$.  
Then, the ellipsoid $\mathcal{E}(P, r)$ is a trapping region if there exists $m \in \mathbb{R}^n$, $P > 0$, $r$, $\chi \geq 0$ such that 
\begin{align}
\Theta (m,P,r,\chi) < 0, \  G^\top \mathrm{vec}(P) = 0. \label{eq:global_trapping_constraints}
\end{align} 
\end{theorem}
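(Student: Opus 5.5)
We use the energy function $V(y)=y^\top P y$, where $P>0$ satisfies \eqref{eq:global_trapping_constraints}, and an S-procedure argument. The equality constraint $G^\top \mathrm{vec}(P)=0$ says exactly that $y^\top P Q(y)=0$ for all $y$, by the characterization preceding Corollary \ref{corollary:sufficient_trapping_condition_genloss}. As in the proof of Theorem \ref{theorem:sufficient_trapping_condition} with $S$ replaced by $P$, the cubic terms in $\dot V$ cancel and
\begin{equation*}
\dot V(y)=\begin{bmatrix} y\\ 1\end{bmatrix}^\top \begin{bmatrix} PL(m)+L^\top(m)P & Pc(m)\\ c^\top(m)P & 0\end{bmatrix}\begin{bmatrix} y\\ 1\end{bmatrix}.
\end{equation*}

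Next we bring in the multiplier. Multiplying $\Theta(m,P,r,\chi)<0$ on both sides by $[y^\top\ 1]$ and its transpose gives, for every $y\in\mathbb{R}^n$,
\begin{equation*}
\dot V(y)+\chi\left(y^\top P y-r^2\right) <0 .
\end{equation*}
Suppose first that $\chi>0$. Then $\dot V(y)<-\chi(V(y)-r^2)\le 0$ whenever $V(y)\ge r^2$. If instead $\chi=0$, the top-left block of $\Theta<0$ gives $PL+L^\top P<0$. The off-diagonal term is then absorbed via the Schur complement, which also yields $\dot V<0$ everywhere, so $\chi=0$ is a degenerate case. The same inequality shows that $\Theta<0$ forces $-\chi r^2<0$, so $r\neq 0$ and $\mathcal{E}(P,r)$ is a genuine nondegenerate ellipsoid, which is compact since $P>0$.

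The main step is to turn ``$\dot V<0$ on $\{V\ge r^2\}$'' into the two required properties. For forward invariance, suppose a trajectory starting in $\mathcal{E}(P,r)$ leaves it. Then there is a time at which $V=r^2$ and $V$ is increasing, contradicting $\dot V<0$ on the boundary; a standard first-exit-time argument makes this rigorous. For global attraction, take the open set $\mathcal{O}=\{V<r^2+\epsilon\}$ for any $\epsilon>0$. On the set $\{V\ge r^2+\epsilon\}$ we get the uniform bound $\dot V\le-\chi\epsilon$ when $\chi>0$. In the case $\chi=0$ we instead use $\dot V\le \lambda_{\max}(PL+L^\top P)\|y\|^2+2\|Pc\|\|y\|$, which is bounded above by a negative constant for large $V$. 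Either way, $V$ decreases at least linearly in time while outside $\mathcal{O}$, so every trajectory enters $\mathcal{O}$ in finite time; this bound also rules out finite escape time. Hence $\mathcal{E}(P,r)$ is a globally attracting trapping region.

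The expected obstacle is the subtle point about strictness on the boundary, together with the treatment of $\chi=0$. Because $\Theta<0$ is strict, the inequality $\dot V<0$ holds strictly on $\{V=r^2\}$, which is what makes the invariance argument go through cleanly.
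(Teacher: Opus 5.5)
Your proof is correct and follows the paper's route. You cancel the cubic term using $G^\top \mathrm{vec}(P)=0$, evaluate $\Theta<0$ on $[y^\top\ 1]^\top$ to get $\dot V+\chi(V-r^2)<0$, and conclude invariance; where the paper cites Nagumo's theorem you use a first-exit-time argument, and you also verify global attraction and absence of finite escape time, which the paper leaves implicit. The one flaw is your $\chi=0$ branch: it is vacuous, because the $(2,2)$ entry $-\chi r^2$ of $\Theta<0$ must be negative, as you note yourself. Its claim that $\dot V<0$ everywhere is also false in any case, since $\dot V(0)=0$, so you should simply drop that branch.
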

\begin{proof}
Similar to the proof of Theorem \ref{theorem:sufficient_trapping_condition}, consider
$V(y) = y^\top P y$ where $P > 0$ and take $q(t) = Q(y(t))$. 
We want to characterize the ellipsoid $\mathcal{E}(P, r)$, which equals the sub-level set $\{ y\in \mathbb{R}^n: \, V(y) \leq r^2 \}$, as the trapping region.
Considering the quadratic nonlinearity $Q$ to be generalized lossless with respect to $S = P$, we have $G^\top \mathrm{vec}(P) = 0$.
Then, $\dot{V}(y(t))$ along trajectories of \eqref{eq:Quadsys_shift_coordinates} is as shown in \eqref{eq:energy_derivative_globalTRA-1} after replacing $S$ with $P$. 
Now, we require $\dot{V}(y(t)) < 0$ for all $y(t) \in \mathbb{R}^n \setminus \mathcal{E} (P, r)$.
This holds if there exists $\chi \geq 0$ such that
\begin{equation*}
    \dot{V}(y) + \chi \left( y^\top(t) P y(t) - r^2 \right) < 0, \forall y(t) \in \mathbb{R}^n \setminus \mathcal{E} (P, r).
\end{equation*} 
Substituting the expression of $\dot{V}(y)$ yields 
\begin{equation*} 
\begin{split}
    & y^\top(t) \left(PL(m) + L^\top(m)P + \chi P \right) y(t) \\ & + y^\top(t) P c(m) + c^\top(m) P y(t) - \chi r^2 < 0 
\end{split}
\end{equation*}
for all $y(t) \in \mathbb{R}^n \setminus \mathcal{E} (P, r)$. This is equivalent to
\begin{equation*}
\begin{bmatrix}
        y(t) \\ 1
    \end{bmatrix}^\top 
    \Theta (m,P,r,\chi)  \begin{bmatrix}
        y(t) \\ 1
    \end{bmatrix} < 0,\forall y(t) \in \mathbb{R}^n \setminus \mathcal{E} (P, r)
\end{equation*}
which holds if $\Theta (m,P,r,\chi) < 0$.
Hence, we have $\dot{V}(y) < 0$ on the boundary and outside of $\mathcal{E}(P, r)$. 
It is straightforward to show that $\mathcal{E}(P, r)$ is invariant with respect to the dynamics in \eqref{eq:Quadsys_shift_coordinates} (e.g., by application of Nagumo's theorem \cite{ames2019control}) and thus a trapping region. 
\end{proof}
If the constraints \eqref{eq:global_trapping_constraints} hold, the trajectories of \eqref{eq:Quadsys_shift_coordinates} satisfy $y(t) \in \mathcal{E}(P,r)$ for sufficiently large $t$. This translates into $x(t) \in \{m\} + \mathcal{E}(P,r)$ for the trajectories of \eqref{eq:Quad_sys}.

\subsection{Numerical Solutions} \label{sec:Numerical Solutions}
Any feasible tuple $(m,P,r,\chi)$ satisfying the constraints \eqref{eq:global_trapping_constraints} 
characterizes a trapping ellipsoid. 
The goal of our numerical procedure is to compute the tightest trapping ellipsoid $\mathcal{E} (P,r)$. 
This can be achieved by solving an optimization of the form
\begin{equation}
 \label{eq:nonlinear_trapping_optimization_QuadROM}
\begin{aligned}
    & \min_{m \in \mathbb{R}^n, P > 0, r, \chi \geq 0} && f(m,P,r) \\
    & \mathrm{subject~to} && \eqref{eq:global_trapping_constraints}
\end{aligned}
\end{equation}
where $f(\cdot)$ is an appropriately chosen objective function. For example, $f(m,P,r) = r^2$ optimizes the trapping ellipsoid radius.
Alternatively, one can optimize the ellipsoid shape $P$ by choosing an appropriate objective function (e.g., trace or logarithmic determinant of $P$).
Note that \eqref{eq:nonlinear_trapping_optimization_QuadROM} is a nonlinear optimization problem as $\Theta (m,P,r,\chi) < 0$ is a nonlinear constraint.
Hence, it can be solved in various ways. 
In this letter, we split the nonlinear optimization up into two main sub-problems.

The sufficient conditions in Corollary \ref{corollary:sufficient_trapping_condition_genloss} are used for computing the shift coordinates. 
The associated optimization problem is given by 
\begin{equation} \label{eq:shift_optimization_initial}
\begin{aligned}
& \min_{a \in \mathbb{R}, m \in \mathbb{R}^n, P > 0} &&  a  \\
& \mathrm{subject~to} && P L(m) + L^\top(m) P \leq a I, \\ 
& && \|m\| \leq \delta_m, \ G^\top \mathrm{vec}(P) = 0
\end{aligned}
\end{equation}
where $\delta_m$ is a pre-specified upper bound on the shift coordinate norm. 
This constraint essentially acts as a regularizer so that the solution remains within a region of interest in the state space.
The optimization problem \eqref{eq:shift_optimization_initial} is non-convex as the constraint $P L(m) + L^\top(m) P \leq a I$ is bilinear in the variables $(m, P)$. 
In this work, we solve \eqref{eq:shift_optimization_initial} by iterating over $m$ and $P$: solve for $P$ with a fixed $m$, then fix $P$ to solve for $m$. We initialize with a random, unit-norm $m$ and repeat until convergence in~$a$.

Let us denote the solution of \eqref{eq:shift_optimization_initial} by $(a^\star, m^\star, P^\star)$. 
We assume that $a^\star < 0$ such that a trapping region exists (see Corollary \ref{corollary:sufficient_trapping_condition_genloss}). 
Utilizing the associated shift coordinates $m^\star$, we formulate the following semi-definite program~(SDP) to compute the trapping ellipsoids:
\begin{equation} \label{eq:trapping_optimization}
\begin{aligned}
    & \min_{P > 0, r^2, \chi \geq 0}  && r^2 \\
    & \mathrm{subject~to} && 
    \Theta (m^\star,P,r,\chi) \leq 
-\epsilon I,  \\ 
& && G^\top \mathrm{vec}(P) = 0
\end{aligned}
\end{equation}
where $\epsilon>0$ is a pre-specified tolerance and we grid over $\chi$ to address the bilinear term $\chi P$ in $\Theta (m^\star,P,r,\chi)$.  
Let $(m^{\star^{[1]}} = m^\star, P^{\star^{[1]}}, r^\star, \chi^\star)$ be the ``best'' solution on the $\chi$ grid with the tightest ellipsoid.
To refine $(r^\star, \chi^\star)$, solve a generalized eigenvalue problem~(GEVP)~\cite{boyd1993method},
\begin{equation} \label{eq:trapping_optimization_GEVP}
\begin{aligned}
    & \min_{r^2, \chi \geq 0}  && r^2 \\
    & \mathrm{subject~to} && 
    r^2 \begin{bmatrix}
    0 & 0 \\
    0 & \chi
\end{bmatrix} - \Theta (m^{\star^{[1]}},P^{\star^{[1]}},0,\chi) > 0, 
\end{aligned} 
\end{equation} 
to obtain $(r^{\star^{[1]}}, \chi^{\star^{[1]}})$, yielding the 
solution tuple $\Psi_1 :=(m^{\star^{[1]}}, P^{\star^{[1]}}, r^{\star^{[1]}}, \chi^{\star^{[1]}})$.
A local search in the shift coordinates can be conducted to further refine $\Psi_1$. 
It is a two-step procedure, summarized as follows: 
\begin{enumerate}
    \item Set $m = m^{\star^{[1]}} + \Delta m$ and linearize $c(m)$ at $m^{\star^{[1]}}$; solve an SDP of the form \eqref{eq:trapping_optimization} with $P = P^{\star^{[1]}}, \chi = \chi^{\star^{[1]}}$ and $(\Delta m, r^2)$ as the variables to get $\Delta m^\star$.
    \item Set $m^{\star^{[2]}} = m^{\star^{[1]}} + \Delta m^\star$. Solve SDP \eqref{eq:trapping_optimization} with $( m^\star,\chi) = ( m^{\star^{[2]}}, \chi^{\star^{[1]}})$ to compute $(P^{\star^{[2]}}, r^{\star^{[2]}})$.
    Define $\Psi_2 := (m^{\star^{[2]}}, P^{\star^{[2]}}, r^{\star^{[2]}}, \chi^{\star^{[1]}})$.
\end{enumerate}
If $\mathcal{E}(P^{\star^{[2]}}, r^{\star^{[2]}}) \subset \mathcal{E}(P^{\star^{[1]}}, r^{\star^{[1]}})$, then $\Psi_2$ is the final solution and vice versa.  
We adopt the following two ways of treating the equality constraint $G^\top \mathrm{vec}(P) = 0$:
\begin{itemize}
    \item \textit{Hard constraint approach}: Explicitly enforce the constraint by assigning structure to $P=P^\top$ such that $G^\top \mathrm{vec}(P) = 0$ (see Sections \ref{Sec: Modified Lorenz–Stenflo System}, \ref{Sec: Two-dimensional Academic Example}), restrict $P$ to this structure when solving the SDPs and discard the equality constraint. 
    \item \textit{Soft constraint approach}: Solve the SDPs with $G^\top \mathrm{vec}(P) = 0$ and without restricting $P$ to any specific form; therefore, the constraint gets satisfied up to a numerical tolerance (see Section \ref{Sec: Unsteady Aerodynamics Model}).
\end{itemize}

\section{Numerical Examples} \label{sec:Numerical Examples}
Numerical results for three examples are provided here.
Section \ref{Sec: Modified Lorenz–Stenflo System} discusses an example with a generalized lossless nonlinearity.
The examples in Sections \ref{Sec: Unsteady Aerodynamics Model} and \ref{Sec: Two-dimensional Academic Example}
have nonlinearities that are lossless in the standard sense,
and the proposed procedure of computing the shift coordinates leads to tighter trapping regions for these two examples compared to prior methods. 
All the results in this section are computed in MATLAB R2025b using cvx with SDPT3 as the solver.
We take $\epsilon = 10^{-6}$ and set \texttt{cvx\_precision} to `best' to promote high numerical accuracy.
The polynomial computations (see Section \ref{Sec: Analysis Conditions}) are carried out using the `collect' and `poly2basis' routines in SOSTOOLS \cite{sostools}.
These computations lead to $G$, which is fixed for a given quadratic nonlinearity. For the Lorenz attractor in \eqref{eq:Lorenz_system}, $G$ is a $9 \times 5$ matrix of zeroes with the non-zero entries
\begin{align*}
    G(4,3) &= G(5,4) = G(6,5) = -1, \\ 
    G(7,1) &= G(8,2) = G(9,4) = 1.
\end{align*}
Then, enforcing $G^\top \mathrm{vec}(S) = 0$ yields the  $S$ in \eqref{eq:generalized_lossless_matrix_Lorenz}.
The largest principal semi-major axis length is used to determine ellipsoid `size'. For a trapping ellipsoid $\mathcal{E}(P,r)$, it is given by $\alpha := \frac{r}{\sqrt{\lambda_{\mathrm{min}}(P)}}$ \cite{pope2008ellipsoids}.
We refer to $\alpha$ as the ``spherical trapping radius'', which facilitates comparison with existing spherical trapping region results.
Note that $\mathcal{E}(P,r) \subseteq \mathcal{B}(\alpha)$ as $\mathcal{B}(\alpha)$ circumscribes $\mathcal{E}(P,r)$. 
Thus, trajectories will eventually converge to $\mathcal{B}(\alpha)$, which is not necessarily a trapping region as it might not be invariant. 
Trapping ellipsoids are computed on a $\chi$ grid of 100 logarithmically spaced points between 0.1 and 10. 
The GEVP \eqref{eq:trapping_optimization_GEVP} is solved using a bisection on $r$.
The codes are available at \url{https://github.com/digantabhattacharjee/Ellipsoidal-global-trapping}


\subsection{Modified Lorenz–Stenflo (MLS) System} \label{Sec: Modified Lorenz–Stenflo System}
We modify the four-dimensional Lorenz-Stenflo system given in \cite{peng2025extending} to the following form 
\begin{equation} \label{eq:Modified_Lorenz–Stenflo_system}
\begin{aligned}
\dot{x} &= \sigma (y-x) + \gamma w, \
\dot{y} = \mu x - xz - y \\ 
\dot{z} &= 2xy - \kappa z, \
\dot{w} = -x - \sigma w
\end{aligned} 
\end{equation}
where $\sigma = 2$, $\kappa = 0.7$, $\mu=26$ and $\gamma=1.5$.
The nonlinearity in \eqref{eq:Modified_Lorenz–Stenflo_system} is \textit{not} lossless as needed for existing methods, but admits a \emph{generalized lossless} constraint $G^\top \mathrm{vec}(P) = 0$ with respect to any $P=P^\top$ given by 
\begin{equation*}
    P = \begin{bmatrix}
        p_{11} & 0          &  0  & p_{14}  \\
         0     &  2 p_{33}  &  0  & 0       \\
         0     &     0      &  p_{33}  & 0 \\
       p_{14}  &     0  &      0    & p_{44}
    \end{bmatrix},~p_{ij} \in \mathbb{R}.
\end{equation*} 
Using the above structure of $P$, we solve 
\eqref{eq:shift_optimization_initial} with $\delta_m = 30$ to yield $m^\star = \begin{bmatrix}
0 & 0 & 25.22 & 0 \end{bmatrix}^\top$,  $a^\star = -4.91$, and $(p^\star_{11},p^\star_{14},p^\star_{33},p^\star_{44}) = (6.67,3.98,5,5.26)$.
Hence, \eqref{eq:Modified_Lorenz–Stenflo_system} is globally bounded and admits a trapping ellipsoid.

The spherical trapping radii on the $\chi$ grid are shown in Fig. \ref{fig:Modified_Lorenz–Stenflo system}.
The best solution on the grid corresponds to $(r^\star, \chi^\star) = (25.743, 0.559)$, $(p^\star_{11},p^\star_{14},p^\star_{33},p^\star_{44}) = (2.16,2.63,1,7.01)$, and $\alpha^\star = r^\star$. 
We observe marginal improvements by solving the GEVP \eqref{eq:trapping_optimization_GEVP}, which yields $(r^{\star^{[1]}}, \chi^{\star^{[1]}}) = (25.721, 0.562)$ and $\alpha^{\star^{[1]}} = r^{\star^{[1]}}$.
The local search in the shift coordinates (see Section \ref{sec:Numerical Solutions}) does not provide any improvement for this example.  
Finally, the estimated ultimate bound from the GEVP solution is $\alpha^{\star^{[1]}} + \|m^\star\| = 50.94$.
Running 1000 simulations of the system \eqref{eq:Modified_Lorenz–Stenflo_system}, with each state randomly initialized between -100 and 100 for each run, yields an empirical ultimate bound of 42.34. 
Thus, the estimate obtained from the analysis is somewhat conservative.

\begin{figure}[!hbt]
\centering
\includegraphics[width=1\columnwidth]{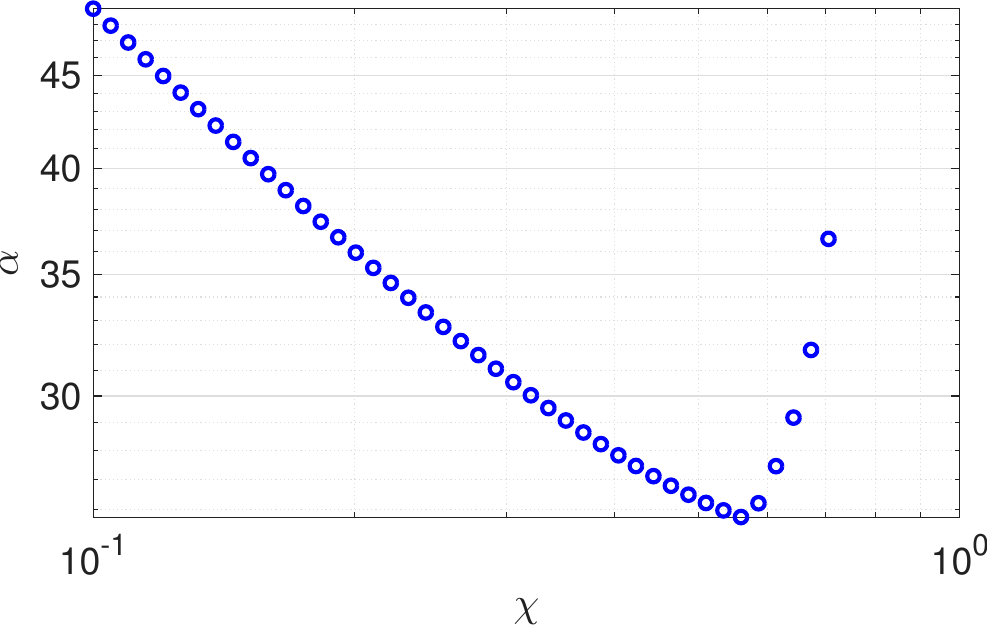}
\caption{Radii $\alpha$ over $\chi$ grid for the MLS system \eqref{eq:Modified_Lorenz–Stenflo_system}. The trend is qualitatively similar for the unsteady aerodynamics model~(Section \ref{Sec: Unsteady Aerodynamics Model}) and the academic example (Section \ref{Sec: Two-dimensional Academic Example}). } 
\label{fig:Modified_Lorenz–Stenflo system}
\end{figure}

\subsection{Unsteady Aerodynamics Model~\cite[Appendix C]{heide2025data}} \label{Sec: Unsteady Aerodynamics Model}
We apply the proposed method on a low-order model of unsteady separated flow over a NACA 65(1)-412 airfoil at a Reynolds number of 20,000 \cite{heide2025data}.
The system is globally bounded \cite{heide2025data} and solving the optimization in \eqref{eq:shift_optimization_initial} with $\delta_m = 50$ yields $a^\star = -5.81 \times 10^{-4}$ and $m^\star = \begin{bmatrix}
-18.6 & 8.3 & -2.3 & -14.6 & -42.7 & -6.5
\end{bmatrix}^\top$.
The best solution on the $\chi$ grid corresponds to $(r^\star,\chi^\star) = (4.308, 4.535)$, $\alpha^\star = 430.8$ and $P^{\star^{[1]}} \approx 10^{-4}I$ (details omitted here for brevity).
We get minimal improvements by solving the GEVP \eqref{eq:trapping_optimization_GEVP}, which yields $(r^{\star^{[1]}},\chi^{\star^{[1]}}) = (4.306, 4.457)$ and $\alpha^{\star^{[1]}} = 430.6$.
We take the soft constraint approach for $G^\top \mathrm{vec}(P) = 0$ in this example, and the largest (absolute value) element of $G^\top \mathrm{vec}(P^{\star^{[1]}})$ equals $7.85 \times 10^{-10}$.
The local search in the shift coordinates does not lead to any further improvements.
By contrast, the tightest trapping region obtained by employing the framework in \cite{liao2025convex} on the same model is $\mathcal{B}(1637.56)$, and the corresponding shift coordinates are $m = \begin{bmatrix}
        -205 &  75  &  -27 & -112 & -266  & -151 
    \end{bmatrix}^\top$.
The smaller trapping region in the current analysis is largely due to the constraint $\|m\| \leq \delta_m$ in \eqref{eq:shift_optimization_initial}, and the value of 
$\delta_m$ can be tuned for further improvements.   
Finally, the ultimate bound estimated via the framework in \cite{liao2025convex} is 2031.3, while the proposed method yields a much tighter estimate of $\alpha^{\star^{[1]}} + \|m^\star \| = 480.6$.

\subsection{Two-dimensional Academic Example \cite{liao2025convex}} \label{Sec: Two-dimensional Academic Example}
The two-dimensional system~\cite{liao2025convex}
\begin{equation*} 
\begin{bmatrix}
\dot{x}_1 \\ \dot{x}_2 
\end{bmatrix} = \begin{bmatrix}
-1  & 0 \\ 
0 & -4
\end{bmatrix} \begin{bmatrix}
x_1 \\ x_2 
\end{bmatrix} + \begin{bmatrix}
-x_1 x_2 \\ x_1^2
\end{bmatrix} + \begin{bmatrix}
0 \\ 1
\end{bmatrix}
\end{equation*}
has a globally asymptotically stable equilibrium point at $x = \begin{bmatrix}
    0 & 0.25
\end{bmatrix}^\top$.
Here, we explicitly enforce $G^\top \mathrm{vec}(P) = 0$, which yields $P=I$, reverting to the standard lossless setting.
We skip the initial shift coordinate computation using \eqref{eq:shift_optimization_initial} and simply take $m^\star = 0$.
The best solution on the grid is given by $(m^{\star^{[1]}}, P^{\star^{[1]}}, r^\star,\chi^\star) = (0, I, 0.2905, 1.963)$.
Upon solving the GEVP \eqref{eq:trapping_optimization_GEVP}, we obtain $(r^{\star^{[1]}},\chi^{\star^{[1]}}) = (0.289, 2)$.
This means that the trapping region is $\mathcal{B}(0.289)$, which is the same as in \cite{liao2025convex}. 
However, the local search in the shift coordinates yields $m^{\star^{[2]}} = \begin{bmatrix}
    0 & 0.25
\end{bmatrix}^\top$ and $(P^{\star^{[2]}}, r^{\star^{[2]}}) = (I,  7.07 \times 10^{-4})$. 
Therefore, the local search not only computes a tighter trapping region, but also (approximately) identifies the equilibrium point of the system.
We take $(m^{\star^{[2]}}, P^{\star^{[2]}}, r^{\star^{[2]}}, \chi^{\star^{[1]}}) = (\begin{bmatrix}
    0 & 0.25
\end{bmatrix}^\top, I, 7.07 \times 10^{-4}, 2)$ as the final solution.

\begin{table}[!hbt]
\centering
\caption{Comparison of spherical trapping radii}
\begin{tabular}{lllll}
\cline{1-5}
\multicolumn{1}{|l|}{Example} & \multicolumn{1}{l|}{ Method 1}  & \multicolumn{1}{l|}{ \thead{ Method 2} } & \multicolumn{1}{l|}{ \thead{ Method 3} } & \multicolumn{1}{l|}{\thead{Proposed}} \\ \cline{1-5}
\multicolumn{1}{|l|}{MLS} & \multicolumn{1}{l|}{N/A} & \multicolumn{1}{l|}{N/A} & \multicolumn{1}{l|}{N/A} & \multicolumn{1}{l|}{25.721} \\ \cline{1-5}
\multicolumn{1}{|l|}{Airfoil} & 
\multicolumn{1}{l|}{1637.56} & 
\multicolumn{1}{l|}{1649.63} & \multicolumn{1}{l|}{1649.63} & \multicolumn{1}{l|}{430.6} \\ \cline{1-5}
\multicolumn{1}{|l|}{Academic} & \multicolumn{1}{l|}{0.289} & 
\multicolumn{1}{l|}{1} & \multicolumn{1}{l|}{1} & \multicolumn{1}{l|}{$0.0007$} \\ \cline{1-5}
\end{tabular}
\label{table:comparison_test_cases}
\end{table}

A comparative summary of the results provided in this section is outlined in Table \ref{table:comparison_test_cases}, where methods 1-3 denote prior work in \cite{liao2025convex}, \cite{schlegel2015long} and \cite{goyal2025guaranteed}, respectively.
Note that the shift coordinates computed using method 1 are used in methods 2 and 3.
For method 3, we utilize the expression given in Section \ref{Sec: Analysis Conditions} to avoid a typo in \cite{goyal2025guaranteed}.
Furthermore, Fig. \ref{fig:Phase_portrait_2D_example} shows a phase portrait of the two-dimensional academic example, along with the associated trapping regions provided in Table \ref{table:comparison_test_cases}. 
It illustrates the improvement offered by the proposed approach for this example relative to the existing methods.

 \begin{figure}[!hbt]
 \centering
 \includegraphics[width=1\columnwidth]{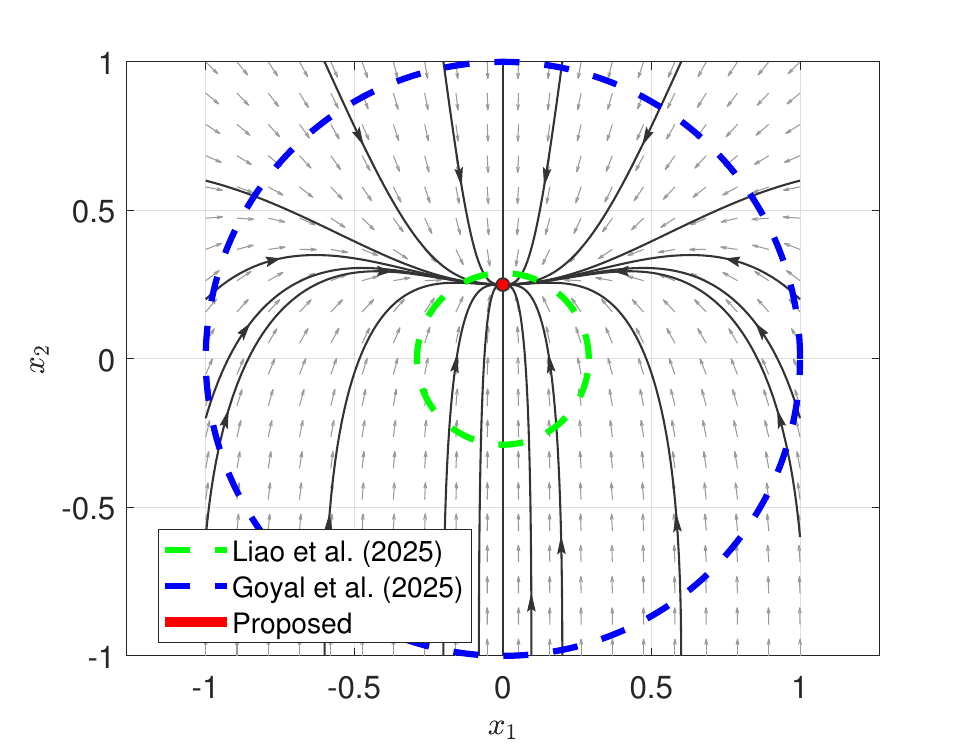}
 \caption{Phase portrait for the academic example (Section \ref{Sec: Two-dimensional Academic Example}), along with the corresponding trapping regions given in Table \ref{table:comparison_test_cases}. The method in \cite{schlegel2015long} yields the same trapping region as obtained via the method by Goyal et al. \cite{goyal2025guaranteed}.} 
 \label{fig:Phase_portrait_2D_example}
 \end{figure}


\section{Conclusions} \label{sec:Conclusions}
We propose a framework for trapping region analysis rooted in an efficient parameterization and optimization of the generalized lossless property in quadratic systems.
Our numerical procedure offers ways of optimizing the shift coordinates to compute tighter trapping regions.
Three examples are provided, highlighting key features like reduced conservatism in the ultimate bound (when applicable) and broader applicability than the prior methods.
Future work will study local trapping regions with the lossless constraint relaxed and draw parallels with prior boundedness analyses.

\bibliographystyle{IEEEtran}
\bibliography{Bibliography.bib}

@article{schlegel2015long,
  title={On long-term boundedness of {Galerkin} models},
  author={Schlegel, Michael and Noack, Bernd R},
  journal={J. Fluid Mech.},
  volume={765},
  pages={325--352},
  year={2015},
  publisher={Cambridge University Press}
}

@article{kaptanoglu2021promoting,
  title={Promoting global stability in data-driven models of quadratic nonlinear dynamics},
  author={Kaptanoglu, Alan A and Callaham, Jared L and Aravkin, Aleksandr and Hansen, Christopher J and Brunton, Steven L},
  journal={Phys. Rev. Fluids},
  volume={6},
  number={9},
  pages={094401},
  year={2021},
  publisher={APS}
}

@article{liao2025convex,
  title={A convex optimization approach to compute trapping regions for lossless quadratic systems},
  author={Liao, Shih-Chi and Leonid Heide, A and Hemati, Maziar S and Seiler, Peter J},
  journal={Int. J. Robust Nonlinear Control},
  volume={35},
  number={6},
  pages={2425--2436},
  year={2025},
  publisher={Wiley Online Library}
}

@book{khalil2002nonlinear,
  title={Nonlinear systems},
  author={Khalil, Hassan K},
  year={2002},
  publisher={Prentice Hall, Upper Saddle River, NJ}
}

@article{peng2025extending,
  title={Extending the trapping theorem to provide local stability guarantees for quadratically nonlinear models},
  author={Peng, Mai and Kaptanoglu, Alan A and Hansen, Christopher J and Stevens-Haas, Jacob and Manohar, Krithika and Brunton, Steven L},
  journal={Phys. Fluids},
  volume={37},
  number={10},
  year={2025},
  publisher={AIP Publishing}
}

@inproceedings{ames2019control,
  title={Control barrier functions: Theory and applications},
  author={Ames, Aaron D and Coogan, Samuel and Egerstedt, Magnus and Notomista, Gennaro and Sreenath, Koushil and Tabuada, Paulo},
  booktitle={Eur. Contr. Conf.},
  pages={3420--3431},
  year={2019},
}

@article{heide2025data,
  title={Data-driven nonlinear aerodynamics models with certifiably optimal boundedness properties},
  author={Heide, A Leonid and Liao, Shih-Chi and Castiblanco-Ballesteros, Sergio and Jacobs, Gustaaf B and Seiler, Peter and Hemati, Maziar S},
  journal={arXiv preprint arXiv:2508.16800},
  year={2025}
}

@article{goyal2025guaranteed,
  title={Guaranteed stable quadratic models and their applications in {SINDy} and operator inference},
  author={Goyal, Pawan and Duff, Igor Pontes and Benner, Peter},
  journal={Phys. D: Nonlinear Phenom.},
  pages={134893},
  year={2025},
  publisher={Elsevier}
}

@techreport{pope2008ellipsoids,
  author      = "Stephen B. Pope",
  title       = "Algorithms for Ellipsoids",
  institution = "Cornell University, FDA-08-01",
  year        = "2008"
}

@article{boyd1993method,
  title={Method of centers for minimizing generalized eigenvalues},
  author={Boyd, Stephen and El Ghaoui, Laurent},
  journal={Linear Algebra Appl.},
  volume={188},
  pages={63--111},
  year={1993},
  publisher={Elsevier}
}

@manual{sostools,
author = {A. Papachristodoulou and J. Anderson and G. Valmorbida and S. Prajna and P. Seiler and P. A. Parrilo},
title = {{SOSTOOLS}: Sum of squares optimization toolbox for {MATLAB}},
year = {2013}
}

@article{tan2008stability,
  title={Stability region analysis using polynomial and composite polynomial Lyapunov functions and sum-of-squares programming},
  author={Tan, Weehong and Packard, Andrew},
  journal={IEEE Trans. Automat. Contr.},
  volume={53},
  number={2},
  pages={565--571},
  year={2008},
  publisher={IEEE}
}

@article{chesi2005lmi,
  title={{LMI}-based computation of optimal quadratic Lyapunov functions for odd polynomial systems},
  author={Chesi, Graziano and Garulli, Andrea and Tesi, Alberto and Vicino, Antonio},
  journal={Int. J. Robust Nonlinear Control},
  volume={15},
  number={1},
  pages={35--49},
  year={2005},
  publisher={Wiley Online Library}
}

@article{liu2020input,
  title={Input-output inspired method for permissible perturbation amplitude of transitional wall-bounded shear flows},
  author={Liu, Chang and Gayme, Dennice F},
  journal={Phys. Rev. E.},
  volume={102},
  number={6},
  pages={063108},
  year={2020},
  publisher={APS}
}

@article{cavalieri2022transition,
  title={Transition to chaos in a reduced-order model of a shear layer},
  author={Cavalieri, Andr{\'e} VG and Rempel, Erico L and Nogueira, Petr{\^o}nio AS},
  journal={J. Fluid Mech.},
  volume={932},
  pages={A43},
  year={2022},
  publisher={Cambridge University Press}
}

@book{vidyasagar2002nonlinear,
  title={Nonlinear systems analysis},
  author={Vidyasagar, Mathukumalli},
  year={2002},
  publisher={SIAM}
}

@article{loiseau2018constrained,
  title={Constrained sparse {Galerkin} regression},
  author={Loiseau, Jean-Christophe and Brunton, Steven L},
  journal={J. Fluid Mech.},
  volume={838},
  pages={42--67},
  year={2018},
  publisher={Cambridge University Press}
}

@article{callaham2023multiscale,
  title={Multiscale model reduction for incompressible flows},
  author={Callaham, Jared L and Loiseau, Jean-Christophe and Brunton, Steven L},
  journal={J. Fluid Mech.},
  volume={973},
  pages={A3},
  year={2023},
  publisher={Cambridge University Press}
}

@article{balajewicz2013low,
  title={Low-dimensional modelling of high-{Reynolds}-number shear flows incorporating constraints from the {Navier--Stokes} equation},
  author={Balajewicz, Maciej J and Dowell, Earl H and Noack, Bernd R},
  journal={J. Fluid Mech.},
  volume={729},
  pages={285--308},
  year={2013},
  publisher={Cambridge University Press}
}

@article{kraichnan1989there,
  title={Is there a statistical mechanics of turbulence?},
  author={Kraichnan, Robert H and Chen, Shiyi},
  journal={Phys. D: Nonlinear Phenom.},
  volume={37},
  number={1-3},
  pages={160--172},
  year={1989},
  publisher={Elsevier}
}

@article{kaptanoglu2021physics,
  title={Physics-constrained, low-dimensional models for magnetohydrodynamics: First-principles and data-driven approaches},
  author={Kaptanoglu, Alan A and Morgan, Kyle D and Hansen, Chris J and Brunton, Steven L},
  journal={Phys. Rev. E.},
  volume={104},
  number={1},
  pages={015206},
  year={2021},
  publisher={APS}
}

@book{schmid_henningson_book,
  author = {Schmid, P. J. and Henningson, D. S.},
  doi = {10.1007/978-1-4613-0185-1},
  publisher = {Springer},
  title = {{Stability and Transition in Shear Flows}},
  year = 2001
}


\end{document}